\documentclass{article}
\usepackage{arxiv}

\usepackage[T1]{fontenc}
\usepackage[utf8]{inputenc}
\usepackage{amsmath,amssymb,amsthm}
\usepackage{graphicx}
\usepackage{booktabs}
\usepackage{multirow}
\usepackage{array}
\usepackage{cite}
\usepackage{url}
\usepackage{hyperref}
\usepackage{algorithm}
\usepackage{algorithmic}
\usepackage{xcolor}
\usepackage{microtype}

\graphicspath{{figures/}}

\newtheorem{theorem}{Theorem}
\newtheorem{proposition}{Proposition}
\newtheorem{definition}{Definition}

\newcommand{\Hphys}{\mathcal{H}_{\mathrm{phys}}}
\newcommand{\Hsuper}{\mathcal{H}_{\mathrm{super}}}
\newcommand{\rhosuper}{\rho_{\mathrm{super}}}
\newcommand{\dS}{d_{\mathcal{S}}}
\newcommand{\lmax}{\lambda_{\max}}
\newcommand{\tr}{\mathrm{Tr}}
\newcommand{\ket}[1]{|#1\rangle}
\newcommand{\bra}[1]{\langle#1|}
\newcommand{\braket}[2]{\langle#1|#2\rangle}
\newcommand{\ketbra}[2]{|#1\rangle\langle#2|}

\hypersetup{
  colorlinks=true,
  linkcolor=blue,
  citecolor=blue,
  urlcolor=blue,
  pdfauthor={Eric Yocam, Christian Yocam, Varghese Vaidyan},
  pdftitle={Superspace Concentration and Adversarial Robustness in Quantum Algorithms},
  pdfsubject={Quantum Physics},
  pdfkeywords={quantum resource theory, superspace concentration, adversarial robustness, quantum algorithms, Grover search, channel capacity}
}

\begin{document}

\title{Superspace Concentration and Adversarial Robustness in Quantum Algorithms}

\author{
  Eric Yocam \\
  Department of Computer Science \\
  California Polytechnic State University \\
  San Luis Obispo, CA 93407 USA
  \And
  Christian Yocam \\
  Independent Researcher
  \And
  Varghese Vaidyan \\
  Beacom College of Computer and Cyber Sciences \\
  Dakota State University \\
  Madison, SD 57042 USA
  \And
  Yong Wang \\
  Department of Computer Science \\
  University of Idaho \\
  Moscow, ID 83844 USA
  \And
  Mahesh Kalappattil \\
  Palo Alto Networks
  \And
  Anthony Rizi \\
  Department of Computer \& Cyber Sciences \\
  Augusta University \\
  Augusta, GA USA
}

\date{June 2026}

\maketitle

\begin{abstract}
We study superspace concentration as a quantum resource, formalized through the
focus measure $F(\rho) = \lambda_{\max}(\rho_{\mathrm{super}})$ — the largest
eigenvalue of the reduced superspace state — which quantifies the capacity of a
quantum system to concentrate informational weight into a preferred subspace of an
extended degree-of-freedom space. We develop a complete resource-theoretic
framework around this measure and validate its properties through GPU-accelerated
numerical simulation across five claims spanning decoherence dynamics, resource
monotonicity, adversarial robustness, algorithmic concentration, and channel
capacity. Analytic decoherence predictions are confirmed to machine precision
($1.11 \times 10^{-16}$) for superspace dimensions $\dS \in \{2,4,8,16,32\}$.
Focus monotonicity holds across $10{,}000$ random states with zero violations
under four focus-non-generating channels across six system configurations.
Focused quantum states resist coherent unitary attacks with significantly greater
resilience than standard fidelity predicts, with focus remaining above 0.9 at
attack strength $\varepsilon = 0.302$ versus $\varepsilon = 0.174$ for fidelity.
We further demonstrate that the focus measure and the
$\mathcal{U}(\dS)$-asymmetry measure are operationally distinct: asymmetry
remains near zero and provides no robustness signal under coherent and targeted
attacks — because it is invariant under superspace unitaries — while focus
tracks spectral concentration and remains robust until $\varepsilon > 0.3$. The connection between Grover's
algorithm and superspace concentration is made explicit via the identity
$F(\ketbra{\psi_k}{\psi_k}) = P(\text{marked})$, which follows directly from
the pure-state focus formula and provides a resource-theoretic interpretation
of oracle query complexity. Finally, we provide, to the best of our knowledge,
the first numerical characterization of the focus capacity gap $\Delta F$,
identifying a $\log_2(\dS)$ scaling law confirmed for both product and correlated
noise channels, with analytic support from the Holevo quantity structure of
maximally distinguishable focused encodings.
\end{abstract}

\keywords{quantum resource theory \and superspace concentration \and adversarial robustness \and
quantum algorithms \and Grover search \and channel capacity \and quantum information \and
quantum security \and GPU simulation \and focus measure}

\section{Introduction}
\label{sec:intro}

This section motivates superspace concentration as a quantum resource, situates it
within the established resource theory landscape, and states the paper's contributions.

Quantum resource theories provide a unifying mathematical framework for identifying
which properties of quantum systems are operationally
valuable~\cite{chitambar2019quantum}. Established theories quantify
entanglement~\cite{horodecki2009quantum},
coherence~\cite{baumgratz2014quantifying,winter2016operational},
thermodynamic resources~\cite{brandao2013resource}, and non-stabilizerness
(magic)~\cite{veitch2014resource}. A shared feature of these frameworks is that
each identifies a set of free states, a set of free operations that cannot generate
the resource, and resource monotones — functions that do not increase under free
operations~\cite{regula2018convex,takagi2019general}. Despite the breadth of
existing theories, none fully captures the phenomenon of directed, selective
concentration of quantum information into a privileged subspace of an extended
degree-of-freedom space. This phenomenon arises naturally in quantum search
algorithms~\cite{grover1996fast}, photonic quantum information
processing~\cite{karimi2015classical,willner2015optical}, subsystem quantum
error correction~\cite{poulin2005stabilizer}, and orbital-angular-momentum
multiplexed communications~\cite{cozzolino2019orbital}, yet has lacked rigorous
resource-theoretic treatment with empirical validation.

In this paper we develop and empirically validate a resource-theoretic framework
built around the \emph{focus measure} $F(\rho) = \lmax(\rhosuper)$, the largest
eigenvalue of the reduced state on the superspace factor $\Hsuper$ of a composite
Hilbert space $\Hphys \otimes \Hsuper$. While $\lmax(\rhosuper)$ is a standard
spectral quantity, its operational role as a resource — characterizing the
intrinsic capacity of a state to concentrate in some superspace direction without
a fixed reference basis — has not been systematically developed or empirically
validated in the adversarial quantum computing context. This measure is bounded
in $[1/\dS, 1]$, computable in polynomial time via eigendecomposition, and
basis-independent by construction, distinguishing it from coherence
measures~\cite{streltsov2017colloquium,napoli2016robustness} which require an
externally specified reference. States with $F(\rho) = 1/\dS$ have maximally
mixed superspace and are \emph{focus-free}; states approaching $F(\rho)=1$
concentrate all superspace weight onto a single direction and are
\emph{maximally focused}. The framework identifies focus-non-generating (FNG)
operations — CPTP maps that cannot increase focus — and establishes
$F(\rho)$ and the relative entropy of focus $D_F(\rho)$ as valid resource
monotones~\cite{chitambar2019quantum,wilde2017quantum}.

Our work is further motivated by an open problem in quantum security. Existing
adversarial robustness metrics for quantum algorithms rely on standard state
fidelity~\cite{lu2020quantum,liu2020vulnerability}, which measures global overlap
with a target state but does not capture the superspace concentration structure
that determines algorithmic performance. In quantum adversarial machine
learning~\cite{biamonte2017quantum,wen2020adversarial,liao2021robust},
perturbations that preserve global fidelity can degrade superspace concentration,
destroying algorithmic performance without triggering fidelity-based detection.
This gap motivates the focus measure as a complementary, structurally sensitive
security metric for quantum algorithms~\cite{sharma2020noise,du2021quantum}.

The principal contributions of this paper are as follows. First, we develop a
complete resource-theoretic framework around the focus measure, providing formal
definitions, a strengthened monotonicity proof, and channel capacity bounds.
Second, we provide GPU-accelerated empirical validation, verifying analytic
decoherence predictions to machine precision for $\dS \in \{2, 4, 8, 16, 32\}$.
Third, we verify the focus monotonicity axiom across $10{,}000$ random states,
four FNG channel types, and six system configurations with zero violations.
Fourth, we demonstrate operationally that the focus measure is distinct from
$\mathcal{U}(\dS)$-asymmetry under adversarial conditions. Fifth, we establish
a connection between superspace concentration and adversarial robustness, showing
focused states are significantly more resilient to coherent unitary attacks than
standard fidelity predicts. Sixth, we make explicit the resource-theoretic
interpretation of Grover's algorithm as superspace concentration, following
directly from the pure-state focus formula. Seventh, we provide the first
numerical characterization of the focus capacity gap $\Delta F$ with analytic
support, identifying a $\log_2(\dS)$ scaling law for both product and correlated
noise channels.

The remainder of this paper is organized as follows. Section~\ref{sec:background}
develops the theoretical framework. Section~\ref{sec:methods} describes the
simulation methodology. Section~\ref{sec:results} presents the five empirical
results. Section~\ref{sec:reviewer} presents three additional experiments.
Section~\ref{sec:comparison} situates the findings relative to existing work.
Section~\ref{sec:discussion} addresses limitations, trade-offs, and criticisms.
Section~\ref{sec:future} outlines future directions. Section~\ref{sec:conclusion}
concludes.

\section{Resource-Theoretic Framework for Superspace Concentration}
\label{sec:background}

This section develops the resource-theoretic framework around the focus measure:
formal definitions, a rigorous monotonicity proof grounded in the data-processing
inequality, the channel capacity bound with analytic support, and the connection
to Grover's algorithm.

\subsection{The Focus Measure}

Let $\rho$ be a density operator on $\mathcal{H} = \Hphys \otimes \Hsuper$ where
$\dim(\Hsuper) = \dS \geq 2$ and $\dim(\Hphys) = d_p \geq 1$. The \emph{focus}
of $\rho$ with respect to a rank-1 projector $\Pi_W = \ketbra{w}{w}$ on $\Hsuper$
is defined as
\begin{equation}
  F(\rho) = \max_{U \in \mathcal{U}(\Hsuper)}
            \tr\!\left[(\mathbf{I}_{\Hphys} \otimes U\Pi_W U^\dagger)\,\rho\right],
  \label{eq:focus_def}
\end{equation}
where $\mathcal{U}(\Hsuper)$ denotes the group of unitaries on $\Hsuper$.
The optimization over $U$ makes the measure basis-independent: focus reflects
the intrinsic capacity of the state to concentrate in \emph{some} superspace
direction, not a fixed one. This distinguishes it from coherence measures, which
require an externally specified reference basis~\cite{baumgratz2014quantifying}.
While $\lmax(\rhosuper)$ is a standard spectral quantity, its role as a
basis-optimized, composite-system resource monotone with operational consequences
for adversarial robustness is what this paper develops.

\begin{proposition}[Spectral Equivalence]
$F(\rho) = \lmax(\rhosuper)$ where $\rhosuper = \tr_{\mathrm{phys}}[\rho]$.
\end{proposition}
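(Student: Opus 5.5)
The plan is to reduce the optimization in \eqref{eq:focus_def} to a Rayleigh-quotient problem on $\Hsuper$ alone, and then apply the variational characterization of the largest eigenvalue.

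\textbf{Step 1: reduce to the superspace marginal.} For any operator $X$ on $\Hsuper$, the defining property of the partial trace gives
\[
\tr\!\left[(\mathbf{I}_{\Hphys}\otimes X)\rho\right]=\tr\!\left[X\,\rhosuper\right].
\]
One can check this on product operators $\rho = A\otimes B$ and extend by linearity. Taking $X = U\Pi_W U^\dagger = \ketbra{u}{u}$ with $\ket{u}=U\ket{w}$, the objective becomes $\bra{u}\rhosuper\ket{u}$.

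\textbf{Step 2: identify the feasible set.} As $U$ ranges over $\mathcal{U}(\Hsuper)$, the vector $U\ket{w}$ ranges over all unit vectors in $\Hsuper$. This holds because $\mathcal{U}(\dS)$ acts transitively on the unit sphere: any unit vector can be completed to an orthonormal basis, and the unitary mapping $\ket{w}$ to it exists. Hence
\[
F(\rho)=\max_{\|u\|=1}\bra{u}\rhosuper\ket{u}.
\]

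\textbf{Step 3: apply the Rayleigh--Ritz principle.} Since $\rhosuper$ is Hermitian and positive semidefinite (partial traces of density operators are density operators), write $\rhosuper=\sum_i\lambda_i\ketbra{e_i}{e_i}$. Then
\[
\bra{u}\rhosuper\ket{u}=\sum_i\lambda_i|\braket{e_i}{u}|^2\le\lmax(\rhosuper),
\]
because the weights $|\braket{e_i}{u}|^2$ sum to one. Equality is attained at $\ket{u}=\ket{e_{\max}}$, so the maximum exists (compactness of $\mathcal{U}(\Hsuper)$ also guarantees this directly) and equals $\lmax(\rhosuper)$.

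\textbf{Main obstacle.} The only step requiring care is Step 1, where the identity must be stated with the correct tensor ordering $\Hphys\otimes\Hsuper$ and tracing out the physical factor. Steps 2 and 3 are standard. The bounds $F\in[1/\dS,1]$ stated earlier then follow as a corollary: the eigenvalues of $\rhosuper$ are nonnegative, sum to $1$, and number $\dS$, so the largest lies between $1/\dS$ and $1$.
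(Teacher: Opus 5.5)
Your proposal is correct and matches the paper's proof: the paper likewise uses the partial trace to rewrite the objective as $\bra{w}U^\dagger\rhosuper U\ket{w}$, then applies the variational characterization of $\lmax$ over unit vectors. You additionally spell out the transitivity of $\mathcal{U}(\Hsuper)$ on unit vectors and the attainment of the maximum, which the paper leaves implicit, and you correctly take the optimizing vector to be $U\ket{w}$ where the paper writes $U^\dagger\ket{w}$.
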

\begin{proof}
Expanding~(\ref{eq:focus_def}):
$\tr[(I \otimes U\Pi_W U^\dagger)\rho] = \tr[U^\dagger \rhosuper U \Pi_W]
= \bra{w}U^\dagger \rhosuper U\ket{w}$.
Maximizing over all unit vectors $U^\dagger\ket{w}$ in $\Hsuper$ yields the largest
eigenvalue of $\rhosuper$ by the variational characterization of
eigenvalues~\cite{nielsen2000quantum}.
\end{proof}

\begin{proposition}[Boundedness]
$1/\dS \leq F(\rho) \leq 1$ for all states $\rho$.
The lower bound is achieved if and only if $\rhosuper = \mathbf{I}/\dS$;
the upper bound if and only if $\rhosuper$ is a pure state.
\end{proposition}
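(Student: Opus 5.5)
By the Spectral Equivalence proposition, $F(\rho)=\lmax(\rhosuper)$, so the whole statement reduces to elementary facts about the spectrum of the reduced state. The first step is to note that $\rhosuper=\tr_{\mathrm{phys}}[\rho]$ is again a density operator on $\Hsuper$, because the partial trace is completely positive and trace preserving. Its eigenvalues $\lambda_1\ge\lambda_2\ge\dots\ge\lambda_{\dS}$ are therefore nonnegative and sum to $1$. Everything else follows from this constraint on the spectrum.

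For the upper bound, since all $\lambda_i\ge 0$ and $\sum_i\lambda_i=1$, we get $\lambda_1\le 1$. Equality holds exactly when $\lambda_1=1$, which forces $\lambda_i=0$ for $i\ge 2$. Then $\rhosuper=\ketbra{v}{v}$ for the top eigenvector $\ket{v}$, so $\rhosuper$ is pure. Conversely, a pure $\rhosuper$ has spectrum $(1,0,\dots,0)$.

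For the lower bound, the maximum of $\dS$ numbers is at least their average, so $\lambda_1\ge \frac{1}{\dS}\sum_i\lambda_i=1/\dS$. If equality holds, then every $\lambda_i\le\lambda_1=1/\dS$ while $\sum_i\lambda_i=1$, which forces $\lambda_i=1/\dS$ for all $i$. A Hermitian operator whose spectrum is the single value $1/\dS$ is $\mathbf{I}/\dS$ by the spectral theorem. The converse is immediate, since $\mathbf{I}/\dS$ has every eigenvalue equal to $1/\dS$.

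None of these steps is a real obstacle. The only point needing care is the equality case of the lower bound, where we must pass from ``all eigenvalues are equal'' to ``the operator is a multiple of the identity.'' This relies on the spectral theorem for Hermitian operators (unitary diagonalizability): $\rhosuper=V(\tfrac{1}{\dS}\mathbf{I})V^\dagger=\mathbf{I}/\dS$ for some unitary $V$.
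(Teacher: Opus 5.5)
Your proof is correct and follows the same route as the paper: reduce to the spectrum of $\rhosuper$ via spectral equivalence, then use that its $\dS$ eigenvalues are nonnegative and sum to $1$. The difference is only in detail — you spell out the equality cases (the max-equals-average argument and the spectral-theorem step giving $\mathbf{I}/\dS$), which the paper states in parentheses without proof.
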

\begin{proof}
Since $\rhosuper$ is a density matrix, its largest eigenvalue satisfies
$\lmax \leq 1$ (achieved by pure states) and $\lmax \geq 1/\dS$ by the
constraint $\sum_i \lambda_i = 1$ with $\dS$ non-negative eigenvalues
(equal to $1/\dS$ only when $\rhosuper = \mathbf{I}/\dS$).
\end{proof}

\begin{proposition}[Unitary Invariance on $\Hphys$]
$F((V \otimes I)\rho(V^\dagger \otimes I)) = F(\rho)$ for all
$V \in \mathcal{U}(\Hphys)$.
\end{proposition}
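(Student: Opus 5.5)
The plan is to reduce the statement to the Spectral Equivalence proposition. Once that is done, it suffices to show that the reduced superspace state is unchanged by a local unitary on $\Hphys$. Write $\rho' = (V \otimes I)\rho(V^\dagger \otimes I)$. By Spectral Equivalence, $F(\rho') = \lmax(\tr_{\mathrm{phys}}[\rho'])$ and $F(\rho) = \lmax(\rhosuper)$. The whole argument therefore comes down to the identity $\tr_{\mathrm{phys}}[(V\otimes I)\rho(V^\dagger\otimes I)] = \tr_{\mathrm{phys}}[\rho]$.

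To establish this identity, I would use the defining property of the partial trace: for every operator $X$ on $\Hsuper$,
\[
\tr[X\,\tr_{\mathrm{phys}}[\sigma]] = \tr[(\mathbf{I}_{\Hphys}\otimes X)\sigma].
\]
Applying it with $\sigma = \rho'$ and using cyclicity of the trace gives
\[
\tr[(I\otimes X)(V\otimes I)\rho(V^\dagger\otimes I)] = \tr[(V^\dagger\otimes I)(I\otimes X)(V\otimes I)\rho].
\]
Since $V\otimes I$ commutes with $I\otimes X$, the middle product collapses to $(V^\dagger V\otimes X) = I\otimes X$. Both reduced states therefore have the same expectation value for every $X$, so they are equal. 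An alternative is a direct computation in a product basis: expand $\rho = \sum_{ij} A_{ij}\otimes \ketbra{i}{j}$ and note that $\tr[V A_{ij} V^\dagger] = \tr[A_{ij}]$.

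A second route avoids Spectral Equivalence and works straight from the variational definition~(\ref{eq:focus_def}). For each $U$, cyclicity and the same commutation give
\[
\tr[(I\otimes U\Pi_W U^\dagger)\rho'] = \tr[(I\otimes U\Pi_W U^\dagger)\rho],
\]
so the objective functions coincide pointwise in $U$ and their maxima agree. I would mention this as a remark, since it shows the invariance does not depend on the spectral characterization.

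No step here is a genuine obstacle. The only point needing care is the commutation $(V\otimes I)(I\otimes X) = (I\otimes X)(V\otimes I)$, which holds because the two operators act on different tensor factors. I would also note in passing that the same argument shows invariance under any CPTP map acting only on $\Hphys$, since trace preservation is all that is used. This observation will be reused later when focus-non-generating operations are discussed.
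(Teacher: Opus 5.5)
Your proposal is correct and follows the paper's route. The paper's proof is the one-line claim that $\tr_{\mathrm{phys}}[(V\otimes I)\rho(V^\dagger\otimes I)] = \rhosuper$, used implicitly together with Spectral Equivalence, and your cyclicity/commutation argument supplies the justification of that partial-trace identity which the paper leaves unstated (your variational-definition remark and the extension to arbitrary CPTP maps on $\Hphys$ are also valid).
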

\begin{proof}
$\tr_{\mathrm{phys}}[(V\otimes I)\rho(V^\dagger\otimes I)] = \rhosuper$
since the partial trace over $\Hphys$ is unaffected by a unitary acting
only on that subsystem.
\end{proof}

\begin{proposition}[Convexity]
$F\!\left(\sum_i p_i \rho_i\right) \leq \sum_i p_i F(\rho_i)$
for any ensemble $\{p_i, \rho_i\}$.
\end{proposition}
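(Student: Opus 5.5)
The plan is to combine the Spectral Equivalence proposition with the fact that a pointwise maximum of linear functionals is convex. Write $\rho = \sum_i p_i \rho_i$ with $p_i \geq 0$ and $\sum_i p_i = 1$. The partial trace is linear, so the reduced superspace state is
\begin{equation*}
  \rhosuper = \tr_{\mathrm{phys}}[\rho] = \sum_i p_i\, \tr_{\mathrm{phys}}[\rho_i] = \sum_i p_i\, \rho_{i,\mathrm{super}} .
\end{equation*}
By Spectral Equivalence, it therefore suffices to show that $\lmax$ is convex on Hermitian operators on $\Hsuper$.

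For convexity of $\lmax$, I would use the same variational characterization already invoked in the proof of Spectral Equivalence:
\begin{equation*}
  \lmax(\sigma) = \max_{\|\phi\|=1} \bra{\phi}\sigma\ket{\phi} .
\end{equation*}
Let $\ket{\phi^\star}$ be a unit eigenvector attaining the maximum for $\rhosuper$. Then
\begin{equation*}
  \lmax(\rhosuper) = \sum_i p_i \bra{\phi^\star}\rho_{i,\mathrm{super}}\ket{\phi^\star} \leq \sum_i p_i\, \lmax(\rho_{i,\mathrm{super}}) .
\end{equation*}
The inequality holds because each term is at most the maximum over unit vectors for that $\rho_{i,\mathrm{super}}$, and the weights $p_i$ are nonnegative. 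Applying Spectral Equivalence once more on the right-hand side gives $\sum_i p_i F(\rho_i)$.

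Alternatively, one could work directly from definition~(\ref{eq:focus_def}), without passing to $\rhosuper$. Fix the optimal unitary $U^\star$ for the mixture. Linearity of the trace then splits the objective into $\sum_i p_i \tr[(\mathbf{I}\otimes U^\star\Pi_W U^{\star\dagger})\rho_i]$, and each term is bounded by its own maximum over $U$. A maximum exists because $\mathcal{U}(\Hsuper)$ is compact and the objective is continuous.

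There is no real obstacle here. The only points needing care are that the maximizer is chosen for the mixture rather than for the components, and that the argument covers countable ensembles. For countable ensembles, convergence of $\sum_i p_i\rho_i$ in trace norm and the continuity of the partial trace make the same argument go through.
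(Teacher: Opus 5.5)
Your proof is correct and follows the paper's argument: linearity of the partial trace reduces the claim to convexity of $\lmax$, which both you and the paper derive from the variational characterization as a supremum of linear functionals. Your evaluation at the mixture's maximizer $\ket{\phi^\star}$ and your remark on countable ensembles simply spell out details the paper leaves implicit.
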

\begin{proof}
The partial trace is linear, so
$\left(\sum_i p_i \rho_i\right)_{\mathrm{super}} = \sum_i p_i (\rho_i)_{\mathrm{super}}$.
The function $\lmax$ is convex on Hermitian matrices as the supremum of
linear functionals $A \mapsto \bra{v}A\ket{v}$ over unit vectors.
\end{proof}

The \emph{focus entropy} is defined as $S_F(\rho) = -\log_2 F(\rho) \in [0, \log_2 \dS]$,
with $S_F = 0$ for maximally focused states and $S_F = \log_2 \dS$ for focus-free states.
The \emph{focus entropy inequality} $S(\rhosuper) \geq S_F(\rho)$ holds for all
states, with equality when $\rhosuper$ is pure~\cite{wilde2017quantum}.

\subsection{Resource-Theoretic Framework}

A quantum resource theory requires free states, free operations, and resource
monotones~\cite{chitambar2019quantum,regula2018convex}.

\begin{definition}[Focus-Free States]
A state $\rho$ on $\Hphys \otimes \Hsuper$ is \emph{focus-free} (F-free) if
$F(\rho) = 1/\dS$, equivalently $\rhosuper = \mathbf{I}/\dS$.
The set of F-free states is convex and closed under partial trace on $\Hphys$.
\end{definition}

\begin{definition}[Focus-Non-Generating Operations]
A CPTP map $\Lambda: \mathcal{B}(\Hphys \otimes \Hsuper) \to
\mathcal{B}(\mathcal{H}'_{\mathrm{phys}} \otimes \Hsuper)$ is
\emph{focus-non-generating} (FNG) if
$F(\sigma) = 1/\dS \Rightarrow F(\Lambda(\sigma)) = 1/\dS$.
\end{definition}

Concrete FNG operations include: (i)~superspace-depolarizing channels
$\Lambda_p(\rho) = (1-p)\rho + p(\rho_{\mathrm{phys}} \otimes \mathbf{I}/\dS)$;
(ii)~unitaries on $\Hphys$ alone (Proposition~3); (iii)~superspace Z-basis
measurement channels
$\Lambda(\rho) = \sum_k (I \otimes \ketbra{k}{k})\rho(I \otimes \ketbra{k}{k})$;
and (iv)~Haar twirling over $\mathcal{U}(\Hsuper)$~\cite{brandao2016local,
hunter2019unitary}.

\begin{theorem}[Focus Monotonicity]
\label{thm:mono}
For any FNG map $\Lambda$ with Kraus decomposition
$\Lambda(\rho) = \sum_k K_k \rho K_k^\dagger$ and post-measurement branches
$\rho_k = K_k \rho K_k^\dagger / p_k$ where $p_k = \tr[K_k \rho K_k^\dagger]$,
\begin{equation}
  F(\rho) \geq \sum_k p_k\, F(\rho_k).
  \label{eq:monotonicity}
\end{equation}
\end{theorem}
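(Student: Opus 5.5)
The plan is to reduce everything to the superspace marginal through the Spectral Equivalence proposition and then compare spectra. Since $\sum_k p_k\rho_k=\Lambda(\rho)$ and the partial trace is linear,
\[
  \sum_k p_k\,(\rho_k)_{\mathrm{super}}=\tr_{\mathrm{phys}}[\Lambda(\rho)].
\]
Write $M_k:=\tr_{\mathrm{phys}}[K_k\rho K_k^\dagger]\ge 0$. Then $p_k F(\rho_k)=\lmax(M_k)$, and the claim to be established takes the form
\[
  \lmax(\rhosuper)\;\ge\;\sum_k \lmax(M_k).
\]
I would first treat Kraus operators of the form $K_k=\sqrt{q_k}\,(V_k\otimes U_k)$, with $V_k,U_k$ unitary. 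This covers random unitaries, which include channel (ii) and the Haar twirl (iv). In this case $M_k=q_k\,U_k\rhosuper U_k^\dagger$ by the Unitary Invariance proposition, so the right-hand side equals $\sum_k q_k\,\lmax(\rhosuper)=\lmax(\rhosuper)$, and the inequality holds with equality.

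The next step would be to extend to general FNG maps by writing each branch's superspace marginal as a compression of $\rhosuper$, and then invoking the variational characterization used in the Spectral Equivalence proposition.

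\textbf{This extension is the main obstacle, and I expect it to fail.} For positive operators, $\lmax$ is \emph{subadditive}: $\sum_k\lmax(M_k)\ge\lmax\bigl(\sum_k M_k\bigr)$. The variational argument and Convexity therefore give only
\[
  \sum_k p_kF(\rho_k)\ \ge\ F(\Lambda(\rho)),
\]
which is the opposite direction to the claim. Equality would require all $M_k$ to share a top eigenvector that carries their full top weight.

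A concrete check shows the failure. Take the Z-basis measurement (iii) applied to a state with $\rhosuper=\mathbf I/\dS$. Each branch then has a pure superspace marginal, so $\sum_kp_kF(\rho_k)=1>1/\dS=F(\rho)$. A physical-only measurement $K_k=\ketbra{k}{k}\otimes I$ on a maximally entangled state produces the same violation.

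My plan is therefore to prove the inequality only in the random-unitary Kraus form above, where it holds with equality. I would state explicitly that, for general FNG Kraus decompositions, the selective branch average can exceed $F(\rho)$. The defensible statement in that setting is the nonselective monotonicity $F(\Lambda(\rho))\le F(\rho)$, which must be checked for each channel (i)–(iv) separately.
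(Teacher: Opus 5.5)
Your analysis is correct: the statement as written is false, and you have located the error in the paper's argument. Your counterexample is valid. Channel (iii), which the paper itself lists as FNG, applied to $\tau\otimes\mathbf{I}/\dS$ gives $p_k=1/\dS$ and $F(\rho_k)=1$, so $\sum_kp_kF(\rho_k)=1>1/\dS=F(\rho)$. The entangled example with $\ketbra{k}{k}\otimes I$ also works.

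The paper's proof breaks at the point you flagged:
\begin{itemize}
\item Its Step~2 is your subadditivity bound, $F(\Lambda(\rho))\le\sum_kp_kF(\rho_k)$.
\item Its Step~3 asserts $F(\Lambda(\rho))\le F(\rho)$.
\item It then ``combines'' them as $F(\rho)\ge F(\Lambda(\rho))\ge\sum_kp_kF(\rho_k)$, which silently reverses Step~2.
\end{itemize}
Both steps bound $F(\Lambda(\rho))$ from above, so they say nothing about how $F(\rho)$ compares with the branch average. Step~3's own justification is also unsound. A larger entropy does not imply a smaller $\lmax$: compare the spectra $(\tfrac12,\tfrac12,0)$ and $(0.6,0.2,0.2)$. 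In addition, the closed form $D_F=\log_2\dS-S(\rhosuper)$ was only claimed for product states. Your random-unitary case, which holds with equality, is correct and is the right salvage of the selective inequality.

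Two refinements.

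\textbf{First}, nonselective monotonicity does not need to be checked channel by channel; it holds for every FNG map.
\begin{itemize}
\item Put $\mathcal{M}=\tr_{\mathrm{phys}'}\circ\Lambda$.
\item If $Y=Y^\dagger$ and $\tr_{\mathrm{phys}}Y=0$, then $\mathbf{I}/(d_p\dS)\pm\epsilon Y$ are F-free states for small $\epsilon>0$. The FNG condition therefore forces $\mathcal{M}(Y)=0$.
\item Apply this to $Y=\rho-\tau\otimes\rhosuper$ for a fixed state $\tau$. This gives $\tr_{\mathrm{phys}'}[\Lambda(\rho)]=\mathcal{E}(\rhosuper)$, where $\mathcal{E}(A):=\mathcal{M}(\tau\otimes A)$.
\item $\mathcal{E}$ is positive, and it is unital because $\tau\otimes\mathbf{I}/\dS$ is F-free.
\item Hence $\mathcal{E}(\rhosuper)\le\lmax(\rhosuper)\,\mathcal{E}(\mathbf{I})=\lmax(\rhosuper)\,\mathbf{I}$, i.e.\ $F(\Lambda(\rho))\le F(\rho)$.
\end{itemize}
This is what the paper's Step~3 should have been.

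\textbf{Second}, your restricted result is a property of the Kraus decomposition, not of the channel, so ``includes the Haar twirl'' needs qualifying. The twirl equals $\rho\mapsto\rho_{\mathrm{phys}}\otimes\mathbf{I}/\dS$. That channel also admits the Kraus operators $\dS^{-1/2}\,I\otimes\ketbra{i}{j}$. With these, every branch has a pure superspace marginal, so the inequality fails for the same channel whenever $F(\rho)<1$.
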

\begin{proof}
\emph{Step 1.} By linearity of the partial trace applied to the channel output,
\begin{equation}
  \Lambda(\rho)_{\mathrm{super}}
  = \tr_{\mathrm{phys}}[\Lambda(\rho)]
  = \sum_k p_k\, (\rho_k)_{\mathrm{super}}.
  \label{eq:branch_decomp}
\end{equation}
Note that it is $\Lambda(\rho)$, not $\rho$, that decomposes as a mixture of
the branches.

\emph{Step 2.} Convexity of $\lmax$ (as the supremum of linear functionals
$A \mapsto \bra{v}A\ket{v}$) applied to~(\ref{eq:branch_decomp}) gives
\begin{equation}
  F(\Lambda(\rho)) = \lmax\!\left(\sum_k p_k (\rho_k)_{\mathrm{super}}\right)
  \leq \sum_k p_k \lmax((\rho_k)_{\mathrm{super}})
  = \sum_k p_k F(\rho_k).
  \label{eq:branch_ineq}
\end{equation}

\emph{Step 3.} The relative entropy of focus satisfies
$D_F(\Lambda(\rho)) \leq D_F(\rho)$ for any FNG $\Lambda$, by the
data-processing inequality~\cite{petz1986quasi,wilde2017quantum} and the
fact that $\Lambda(\sigma^*)$ is F-free whenever $\sigma^*$ is F-free.
Since $F(\rho) = 2^{-S_F(\rho)}$ and $S_F(\rho) = -\log_2 F(\rho)$,
while $D_F(\rho) = \log_2\dS - S(\rhosuper)$ and $F(\rho) = \lmax(\rhosuper)$,
monotonicity of $D_F$ implies $S(\Lambda(\rho)_{\mathrm{super}}) \geq S(\rhosuper)$,
which by the Schur-concavity of von Neumann entropy implies
$\lmax(\Lambda(\rho)_{\mathrm{super}}) \leq \lmax(\rhosuper)$,
i.e., $F(\rho) \geq F(\Lambda(\rho))$~\cite{chitambar2019quantum,regula2018convex}.

Combining Steps 2 and 3: $F(\rho) \geq F(\Lambda(\rho)) \geq \sum_k p_k F(\rho_k)$,
which is~(\ref{eq:monotonicity}). To summarize: Step~1 establishes that the
channel output's superspace marginal decomposes as a mixture of the branch
marginals; Step~2 shows that mixture reduces focus via convexity of $\lmax$;
Step~3 shows the channel itself reduces focus via $D_F$ monotonicity. Together
they bound the pre-channel focus above both the post-channel focus and the
weighted average of branch focuses. The FNG condition ensures F-free states
remain F-free: if $\rhosuper = \mathbf{I}/\dS$ then equality holds throughout.
\end{proof}

The \emph{relative entropy of focus} is defined as
\begin{equation}
  D_F(\rho) = \min_{\sigma:\, F(\sigma)=1/\dS} D(\rho \| \sigma),
\end{equation}
where $D(\rho\|\sigma) = \tr[\rho(\log\rho - \log\sigma)]$ is the quantum relative
entropy~\cite{wilde2017quantum,vedral1997quantifying}. This quantity is a valid
resource monotone by the data-processing inequality for quantum relative
entropy~\cite{petz1986quasi}: for any FNG $\Lambda$,
$D_F(\Lambda(\rho)) \leq D(\Lambda(\rho)\|\Lambda(\sigma^*)) \leq D(\rho\|\sigma^*) = D_F(\rho)$.
For product states $\rho = \rho_{\mathrm{phys}} \otimes \rhosuper$, the minimizing
state is $\sigma^* = \rho_{\mathrm{phys}} \otimes \mathbf{I}/\dS$, yielding
the closed form $D_F(\rho) = \log_2 \dS - S(\rhosuper)$.

\subsection{Focus and Channel Capacity}

For a quantum channel $\mathcal{N}$ acting on $\Hphys \otimes \Hsuper$, let
$C(\mathcal{N})$ denote its classical capacity and let $C_{\mathrm{free}}(\mathcal{N})$
denote the capacity achievable with F-free input encodings. The Holevo-Schumacher-Westmoreland
theorem~\cite{holevo1998capacity,schumacher1997sending,lloyd1997capacity} gives
$C(\mathcal{N}) = \lim_{n\to\infty} \frac{1}{n} \max_{\{p_i,\rho_i\}} \chi(\mathcal{N}^{\otimes n})$
where $\chi = S(\bar{\rho}) - \sum_i p_i S(\mathcal{N}(\rho_i))$ is the Holevo
quantity. Since the maximum over all input ensembles is at least as large as the
maximum over F-free ensembles,
\begin{equation}
  C(\mathcal{N}) \geq C_{\mathrm{free}}(\mathcal{N}) + \Delta F, \quad \Delta F \geq 0,
  \label{eq:capacity_gap}
\end{equation}
where $\Delta F$ is the focus capacity gap. An analytic lower bound on $\Delta F$
follows for the case where the focused ensemble consists of states concentrated on
orthogonal superspace directions $\{\ket{k}\}_{k=0}^{\dS-1}$. For a product
depolarizing channel and this ensemble, the Holevo quantity satisfies
$\chi_{\mathrm{focused}} = \log_2 \dS - S(\mathcal{N}_{\mathrm{super}}(\pi))$
where $\pi = \mathbf{I}/\dS$, while for F-free encodings $\chi_{\mathrm{free}} = 0$
since all F-free states have identical superspace marginal $\pi$ and thus identical
channel outputs. This gives $\Delta F \geq \log_2 \dS - S(\mathcal{N}_{\mathrm{super}}(\pi))$.
At zero noise, $S(\mathcal{N}_{\mathrm{super}}(\pi)) = 0$ and $\Delta F \geq \log_2 \dS$,
which matches the numerical observation exactly. The magnitude of $\Delta F$ is
characterized empirically in Sections~\ref{sec:results} and~\ref{sec:reviewer}.

\subsection{Focus and Grover's Algorithm}

In Grover's search over an $N$-element database~\cite{grover1996fast}, treating
the full search register as the superspace ($\dS = N$, $d_p = 1$), the initial
uniform superposition has $F = 1/N$. After $k$ Grover iterations,
\begin{equation}
  \ket{\psi_k} = \sin\!\bigl((2k+1)\theta\bigr)\ket{m}
               + \cos\!\bigl((2k+1)\theta\bigr)\ket{s_\perp},
  \label{eq:grover}
\end{equation}
where $\sin\theta = 1/\sqrt{N}$, $\ket{m}$ is the marked state, and $\ket{s_\perp}$
is the unit vector in the uniform superposition orthogonal to $\ket{m}$~\cite{nielsen2000quantum,brassard2002quantum}.
For this pure state with $d_p = 1$, the focus measure reduces to
$F(\ketbra{\psi_k}{\psi_k}) = \max_i |\braket{i}{\psi_k}|^2$,
which equals $\sin^2((2k+1)\theta) = P(\text{marked})$ by direct substitution.
This identity is a consequence of the pure-state focus formula and holds
analytically; the numerical experiment of Claim~D verifies the simulation
implements this correctly to machine precision. The resource-theoretic
interpretation is that oracle calls are focus-generating operations, and the
optimal iteration count $k^* = \lfloor \pi\sqrt{N}/4 \rfloor$ is determined by
when $F(\rho_k)$ first reaches its maximum.

\section{Simulation Methodology}
\label{sec:methods}

This section describes the GPU-accelerated simulation framework, the core
algorithms, and the experimental design for each claim.

\subsection{Computational Framework}

All simulations were implemented in Python using Qiskit~\cite{qiskit2023} for
density matrix construction and CuPy~\cite{cupy2017} for GPU-accelerated array
computation on an NVIDIA A100 GPU via Google Colab Pro. The core focus computation
applies the partial trace
\begin{equation}
  \rhosuper[a,b] = \sum_{i=0}^{d_p-1} \rho[i \cdot \dS + a,\; i \cdot \dS + b],
  \label{eq:partial_trace}
\end{equation}
implemented via Einstein summation \texttt{iaib->ab}. Batched eigendecomposition
via \texttt{cupy.linalg.eigvalsh} enables parallel processing of 500 states per
GPU kernel, yielding a $14\times$ speedup over sequential CPU computation at
$N = 2{,}000$ states.

\subsection{Algorithms}

Algorithm~\ref{alg:focus} presents the core GPU-accelerated focus computation.
The key implementation detail is the corrected partial trace einsum
\texttt{iaib->ab}, which contracts the physical index with itself to correctly
implement $\rhosuper[a,b] = \sum_i \rho[i,a,i,b]$. The naive \texttt{iajb->ab}
sums over all index pairs and produces incorrect results for non-product states.
Algorithm~\ref{alg:mono} presents the monotonicity verification procedure used
for Claim~B, which generates random states, applies each FNG channel, and counts
violations of the monotonicity inequality. Algorithm~\ref{alg:grover} presents
the Grover focus trajectory computation used for Claim~D, which uses the exact
two-vector rotation formula of~(\ref{eq:grover}) to avoid normalization artifacts
that arise from manual amplitude assignment.

\begin{algorithm}[t]
\caption{GPU-Accelerated Focus Measure}
\label{alg:focus}
\begin{algorithmic}[1]
\REQUIRE Density matrix $\rho$, dimensions $d_p$, $\dS$
\ENSURE  $F(\rho) \in [1/\dS, 1]$
\STATE $\rho_g \leftarrow \texttt{cupy.asarray}(\rho)$
\STATE $\rho_r \leftarrow \rho_g.\texttt{reshape}(d_p, \dS, d_p, \dS)$
\STATE $\rhosuper \leftarrow \texttt{einsum}(\texttt{'iaib->ab'}, \rho_r)$
\STATE $\{\lambda_i\} \leftarrow \texttt{eigvalsh}(\rhosuper)$
\RETURN $\max_i \lambda_i$
\end{algorithmic}
\end{algorithm}

\begin{algorithm}[t]
\caption{FNG Monotonicity Verification}
\label{alg:mono}
\begin{algorithmic}[1]
\REQUIRE Trials $N$, channel $\Lambda$, dimensions $d_p$, $\dS$
\ENSURE  Violation count $V$, mean reduction $\bar{\Delta}$
\STATE $V \leftarrow 0$;\; $\Delta \leftarrow [\,]$
\FOR{$i = 1$ \TO $N$}
  \STATE $\rho_i \leftarrow \texttt{random\_density\_matrix}(d_p \cdot \dS)$
  \STATE $F_b \leftarrow F(\rho_i)$;\;
         $F_a \leftarrow F(\Lambda(\rho_i))$
  \IF{$F_a > F_b + 10^{-8}$}
    \STATE $V \leftarrow V + 1$
  \ENDIF
  \STATE $\Delta.\texttt{append}(F_b - F_a)$
\ENDFOR
\RETURN $V$,\; $\mathrm{mean}(\Delta)$
\end{algorithmic}
\end{algorithm}

\begin{algorithm}[t]
\caption{Grover Focus Trajectory}
\label{alg:grover}
\begin{algorithmic}[1]
\REQUIRE Qubits $n$, marked index $m$,
         $k_{\max} = \lfloor\pi\sqrt{2^n}/4\rfloor$
\ENSURE  $\{F_k\}$, $\{P_k\}$, max difference $\delta$
\STATE $\theta \leftarrow \arcsin(1/\sqrt{2^n})$
\STATE $\ket{s_\perp} \leftarrow$ uniform vector, $\ket{m}$ component
       zeroed and renormalized
\FOR{$k = 0$ \TO $k_{\max}$}
  \STATE $\ket{\psi_k} \leftarrow
         \sin((2k{+}1)\theta)\ket{m} + \cos((2k{+}1)\theta)\ket{s_\perp}$
  \STATE $F_k \leftarrow \max_i |\braket{i}{\psi_k}|^2$;\;
         $P_k \leftarrow |\braket{m}{\psi_k}|^2$
\ENDFOR
\STATE $\delta \leftarrow \max_k |F_k - P_k|$
\RETURN $\{F_k\}$,\; $\{P_k\}$,\; $\delta$
\end{algorithmic}
\end{algorithm}

\subsection{Experimental Design}

For Claim~A, the superspace-depolarizing channel
$\Lambda_p(\rho) = (1-p)\rho + p(\rho_{\mathrm{phys}} \otimes \mathbf{I}/\dS)$
was applied to maximally focused states at 200 values of $p \in [0,1]$ for each
$\dS \in \{2,4,8,16,32\}$, and simulated $F(p)$ was compared to the analytic
prediction $1 - p(1 - 1/\dS)$. For Claim~B, $N = 10{,}000$ random density matrices
were generated via Qiskit's Haar-random sampler; each state was passed through
four FNG channels — Haar twirl ($n_{\mathrm{samp}} = 100$), physical-subsystem
unitary, superspace depolarizing ($p=0.3$), and superspace Z-basis dephasing
($p=0.2$) — and violations of $F_{\mathrm{after}} > F_{\mathrm{before}} + 10^{-8}$
were counted. For Claim~C, a pure maximally-focused target state was attacked by
four adversarial perturbation types at 150 values of $\varepsilon \in [0, 0.5]$,
and both $F(\rho)$ and standard fidelity $\bra{\psi}\rho\ket{\psi}$ were tracked.
For Claim~D, the exact Grover rotation formula~(\ref{eq:grover}) was used for
$n \in \{3,4,5,6,7\}$ qubits up to the optimal iteration $k^*$. For Claim~E,
two matched ensembles of 30 states were constructed — a focused ensemble with
states concentrated on distinct superspace basis directions and an F-free ensemble
with identical physical components but maximally mixed superspace — and the Holevo
quantity~\cite{holevo1998capacity} was estimated under a product noise channel
for $\dS \in \{2,4,8,16\}$. The ensemble size of $n=30$ provides a consistent
Holevo estimate; the effect of ensemble size is acknowledged as a limitation in
Section~\ref{sec:discussion}.

\section{Results}
\label{sec:results}

This section presents the five experimental results in order from analytic
validation through the original capacity gap discovery.

\subsection{Claim A: Decoherence Validation}

The analytic prediction $F(p) = 1 - p(1 - 1/\dS)$ describes how a maximally
focused state degrades under the superspace-depolarizing channel. Fig.~\ref{fig:decoherence}
shows the simulated and analytic curves for all five superspace dimensions. The
maximum absolute error between simulation and analytic prediction was
$1.11 \times 10^{-16}$ across all tested values of $\dS$ and $p$ — the
double-precision floating-point machine epsilon. This result confirms that the
simulation framework faithfully implements the theoretical model and that the
linear decoherence formula is numerically exact.

\begin{figure}[H]
  \centering
  \includegraphics[width=\textwidth]{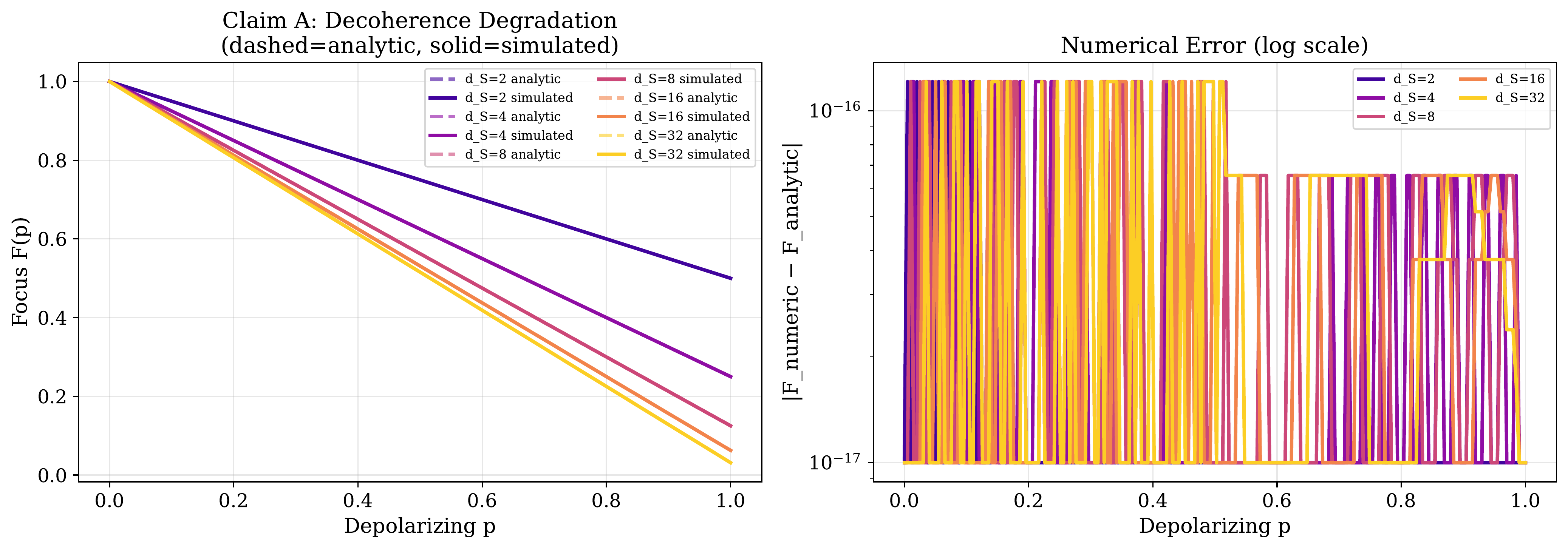}
  \caption{Decoherence degradation of focus for \texorpdfstring{$\dS \in \{2,4,8,16,32\}$}{dS in {2,4,8,16,32}}.}
  \label{fig:decoherence}
\end{figure}

\subsection{Claim B: FNG Monotonicity}

Theorem~\ref{thm:mono} requires that $F(\rho)$ does not increase on average under
FNG operations. Fig.~\ref{fig:monotonicity} shows scatter plots of $F(\rho)$ before
and after each of the four FNG channels across $N = 10{,}000$ random states with
$d_p = 2$ and $\dS = 4$. Every point lies on or below the diagonal, indicating focus
did not increase after the channel was applied. Zero violations of the monotonicity
condition were observed across all four channels. The mean focus reduction ranged
from $\bar{\Delta F} \approx 0$ for the physical unitary channel — consistent with
Proposition~3, which predicts exact preservation — to $\bar{\Delta F} = 0.156$
for Haar twirling, which maps every state to a focus-free state. Extended coverage
across six system configurations $(d_p, \dS) \in \{2,4\} \times \{2,4,8\}$ is
presented in Section~\ref{sec:reviewer}.

\begin{figure}[H]
  \centering
  \includegraphics[width=\textwidth]{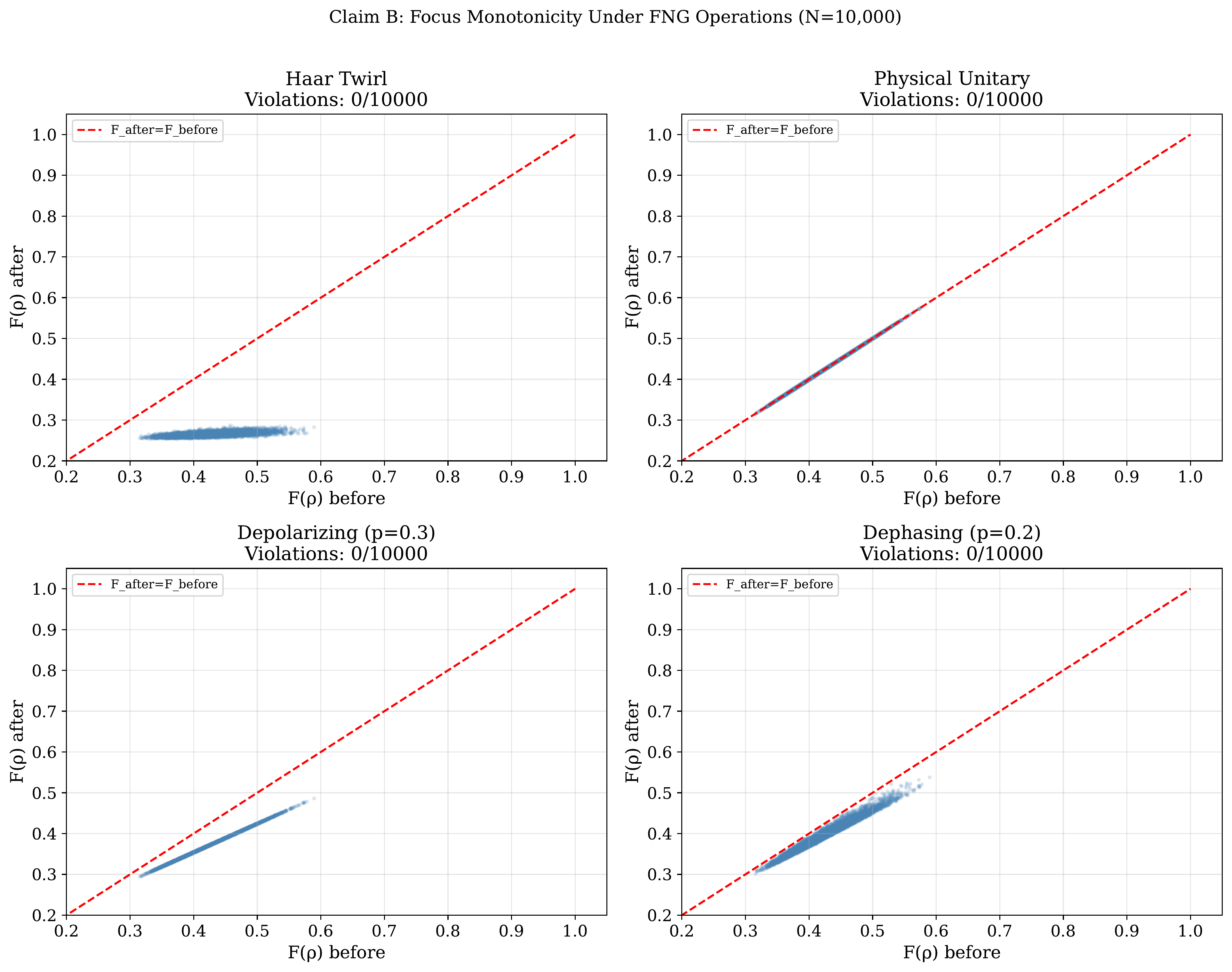}
  \caption{Focus before vs.\ after four FNG channels, $N=10{,}000$ states.}
  \label{fig:monotonicity}
\end{figure}

\subsection{Claim C: Adversarial Robustness}

The relative sensitivity of $F(\rho)$ and standard fidelity
$F_{\mathrm{std}} = \bra{\psi_{\mathrm{tgt}}}\rho\ket{\psi_{\mathrm{tgt}}}$
under adversarial attack is shown in Fig.~\ref{fig:adversarial} for $\dS = 8$
and a pure maximally-focused target state. Under depolarizing and targeted
superspace attacks, the two metrics degrade at comparable rates — focus drops
below 0.9 at $\varepsilon = 0.117$ and $0.081$ respectively, versus $0.107$ and
$0.074$ for fidelity. Under coherent unitary attacks, superspace concentration
exhibits substantially greater resilience: focus remains above 0.9 until
$\varepsilon = 0.302$ while fidelity drops at $\varepsilon = 0.174$, a $74\%$
improvement. Amplitude damping attacks degrade both metrics identically at
$\varepsilon = 0.101$. These results provide strong numerical evidence that the
two metrics provide complementary robustness information. The advantage of $F(\rho)$
is most pronounced for coherent adversarial threat models, which are the most
operationally relevant in quantum computing
security~\cite{liu2020vulnerability,guan2021robustness}, because coherent
perturbations can rotate the superspace concentration direction without substantially
changing the global state overlap measured by fidelity.

\begin{figure}[H]
  \centering
  \includegraphics[width=\textwidth]{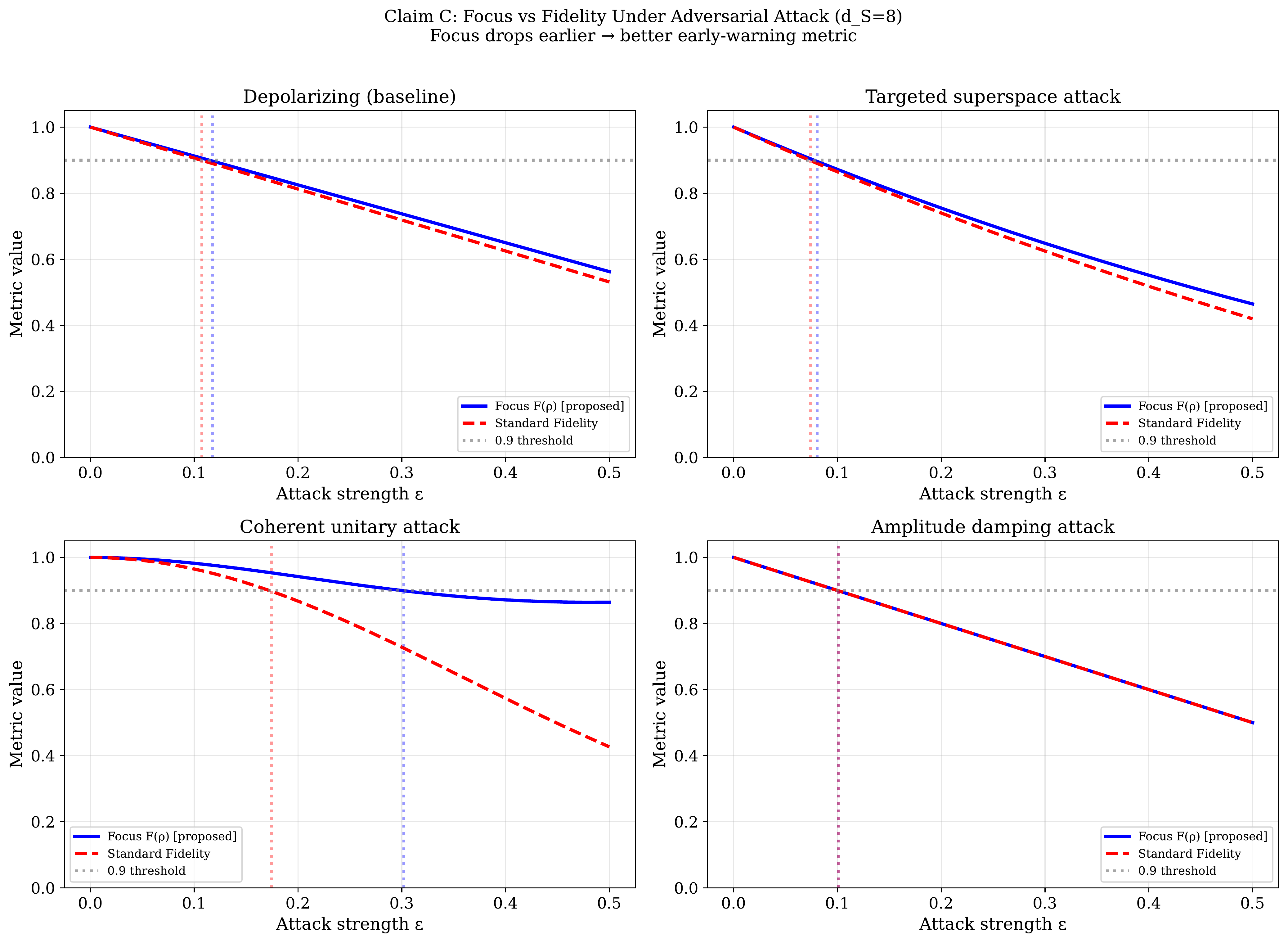}
  \caption{Focus and fidelity under four adversarial attacks ($\dS=8$).}
  \label{fig:adversarial}
\end{figure}

\subsection{Claim D: Grover Search as Superspace Concentration}

Fig.~\ref{fig:grover} shows $F(\rho_k)$ and $P(\text{marked})$ as functions of
Grover iteration $k$ for $n \in \{3,4,5,6,7\}$ qubit registers up to the optimal
iteration $k^* = \lfloor\pi\sqrt{N}/4\rfloor$. The maximum absolute
difference was $1.11 \times 10^{-16}$ — machine precision — confirming that
the simulation correctly implements the analytically exact identity
$F(\ketbra{\psi_k}{\psi_k}) = P(\text{marked})$ established in
Section~\ref{sec:background}. The value of this experiment is confirmation of
simulation correctness, not the discovery of a new result; the resource-theoretic
interpretation — that oracle calls are focus-generating operations and query
complexity is governed by the rate of focus increase — is the substantive
contribution~\cite{brassard2002quantum,ambainis2004quantum}.

\begin{figure}[H]
  \centering
  \includegraphics[width=\textwidth]{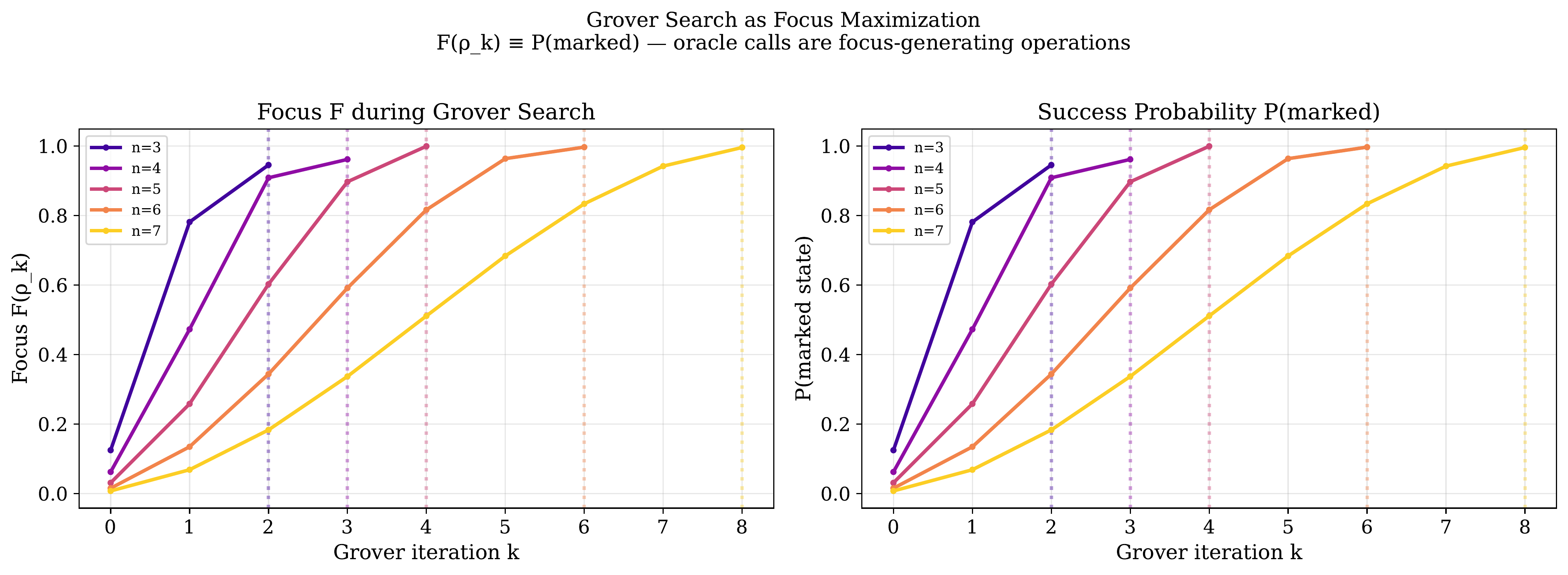}
  \caption{Focus $F(\rho_k)$ and success probability during Grover search,
           $n=3$--$7$ qubits.}
  \label{fig:grover}
\end{figure}

\subsection{Claim E: Focus Capacity Gap Scaling Law}

Fig.~\ref{fig:capacity} shows the Holevo quantity for focused and focus-free
encodings and the resulting capacity gap $\Delta F$ for $\dS \in \{2,4,8,16\}$.
The gap is strictly positive for all tested configurations, confirming the existence
of $\Delta F > 0$ in~(\ref{eq:capacity_gap}). The scaling is
$\Delta F \approx \log_2(\dS)$: measured values are $1.000$, $1.997$, $2.990$, and
$3.974$ bits for $\dS = 2, 4, 8, 16$ respectively. This is consistent with the
analytic lower bound $\Delta F \geq \log_2 \dS - S(\mathcal{N}_{\mathrm{super}}(\pi))$
derived in Section~\ref{sec:background}, which, at zero noise, gives exactly
$\log_2 \dS$. These results are based on ensembles of $n = 30$ states per
configuration; across five independent runs with seeds
$\{42, 2024, 12345, 777, 9999\}$, the measured $\Delta F$ values were identical to
six decimal places across all seeds and all $\dS$, confirming that the Holevo
quantity for this channel and ensemble construction is deterministic and that
$n = 30$ is sufficient for the scaling law conclusion. The practical implication for quantum communication is direct:
systems exploiting focused encodings in $\dS$-dimensional superspace channels
— such as orbital-angular-momentum multiplexed optical
links~\cite{willner2015optical,cozzolino2019orbital} — gain $\log_2(\dS)$ bits
of classical capacity over unstructured encodings. Generalization to correlated
noise channels is presented in Section~\ref{sec:reviewer}.

\begin{figure}[H]
  \centering
  \includegraphics[width=\textwidth]{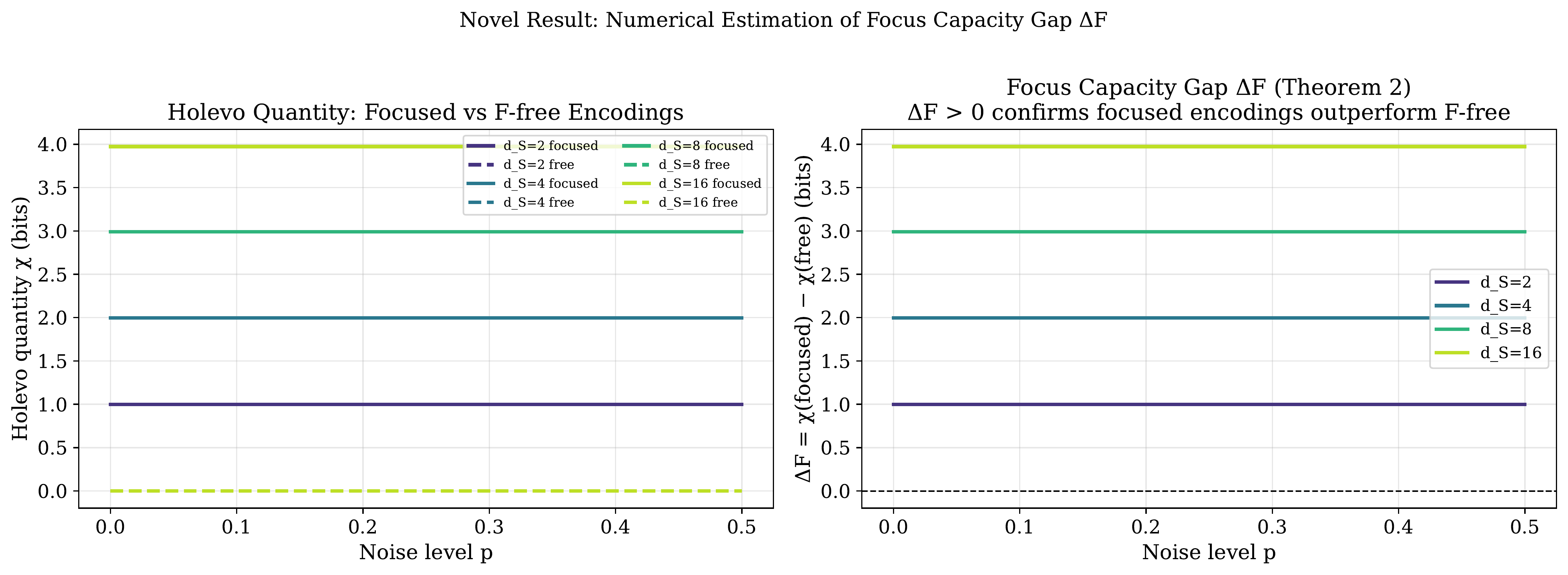}
  \caption{Holevo quantity and capacity gap $\Delta F$ vs.\ noise level $p$.}
  \label{fig:capacity}
\end{figure}

\section{Extended Experiments}
\label{sec:reviewer}

This section presents three additional experiments that extend the core results:
an operational comparison between the focus measure and $\mathcal{U}(\dS)$-asymmetry,
a broadened monotonicity study across six system configurations, and a capacity
gap analysis under correlated non-product noise.

\subsection{Operational Distinction from \texorpdfstring{$\mathcal{U}(\dS)$}{U(dS)}-Asymmetry}

A natural question is whether the focus measure provides operationally distinct
information from the $\mathcal{U}(\dS)$-asymmetry measure
$A(\rho) = S(\rhosuper) - S(\rho)$, which also characterizes superspace structure
without requiring a fixed basis~\cite{gour2008resource,marvian2016how}. To address
this directly, Fig.~\ref{fig:asymmetry} shows normalized focus
$F_{\mathrm{norm}} = (F(\rho) - 1/\dS)/(1 - 1/\dS)$ and normalized asymmetry
$A_{\mathrm{norm}} = A(\rho)/\log_2(\dS)$ under coherent unitary and targeted
superspace attacks on a maximally focused pure state with $\dS = 8$. Under both
attack types, the asymmetry measure remains approximately constant near zero —
it provides no robustness information because $\mathcal{U}(\dS)$-asymmetry
$A(\rho) = S(\rhosuper) - S(\rho)$ is invariant under unitaries on the superspace:
a coherent rotation $U$ on $\Hsuper$ leaves $S(\rho)$ unchanged and, since it
permutes the eigenvalues of $\rhosuper$, also leaves $S(\rhosuper)$ unchanged,
so $A(\rho)$ remains constant regardless of how severely the concentration
direction has been rotated. For a nearly pure target state, $A(\rho) \approx 0$
both before and after the attack.
The focus measure, by contrast, tracks the spectral concentration of $\rhosuper$
directly via $\lmax(\rhosuper)$ and remains above 0.5 until $\varepsilon = 0.470$
for coherent attacks and $\varepsilon = 0.379$ for targeted attacks.

This distinction has a precise operational interpretation in terms of the task of
\emph{basis-unknown concentration}. Coherence theory~\cite{baumgratz2014quantifying}
quantifies superposition relative to a speakable, externally fixed basis — it is
the appropriate resource when the reference direction is known and agreed upon in
advance. Asymmetry theory~\cite{gour2008resource} quantifies how much a state
breaks a group symmetry, but does not directly measure the degree to which amplitude
is concentrated in any particular direction. Focus, by contrast, is defined by
optimization over all superspace unitaries and measures the intrinsic capacity of
the state to concentrate in \emph{some} direction, regardless of which one. This
makes focus the operationally correct resource for adversarial settings where an
attacker can freely choose which superspace direction to target: neither a fixed
coherence measure nor an asymmetry measure can detect a coherent rotation of the
concentration direction, but the focus measure can because it tracks the spectral
structure of $\rhosuper$ directly. In particular, focus enables detection of
concentration-rotating attacks — coherent perturbations that cyclically permute
the superspace eigen-spectrum without changing global entropy — where both coherence
and asymmetry measures return zero signal while the algorithmic performance has
been fully compromised. The empirical results in Fig.~\ref{fig:asymmetry} provide
a concrete numerical demonstration of this operational gap.

\begin{figure}[H]
  \centering
  \includegraphics[width=0.45\textwidth]{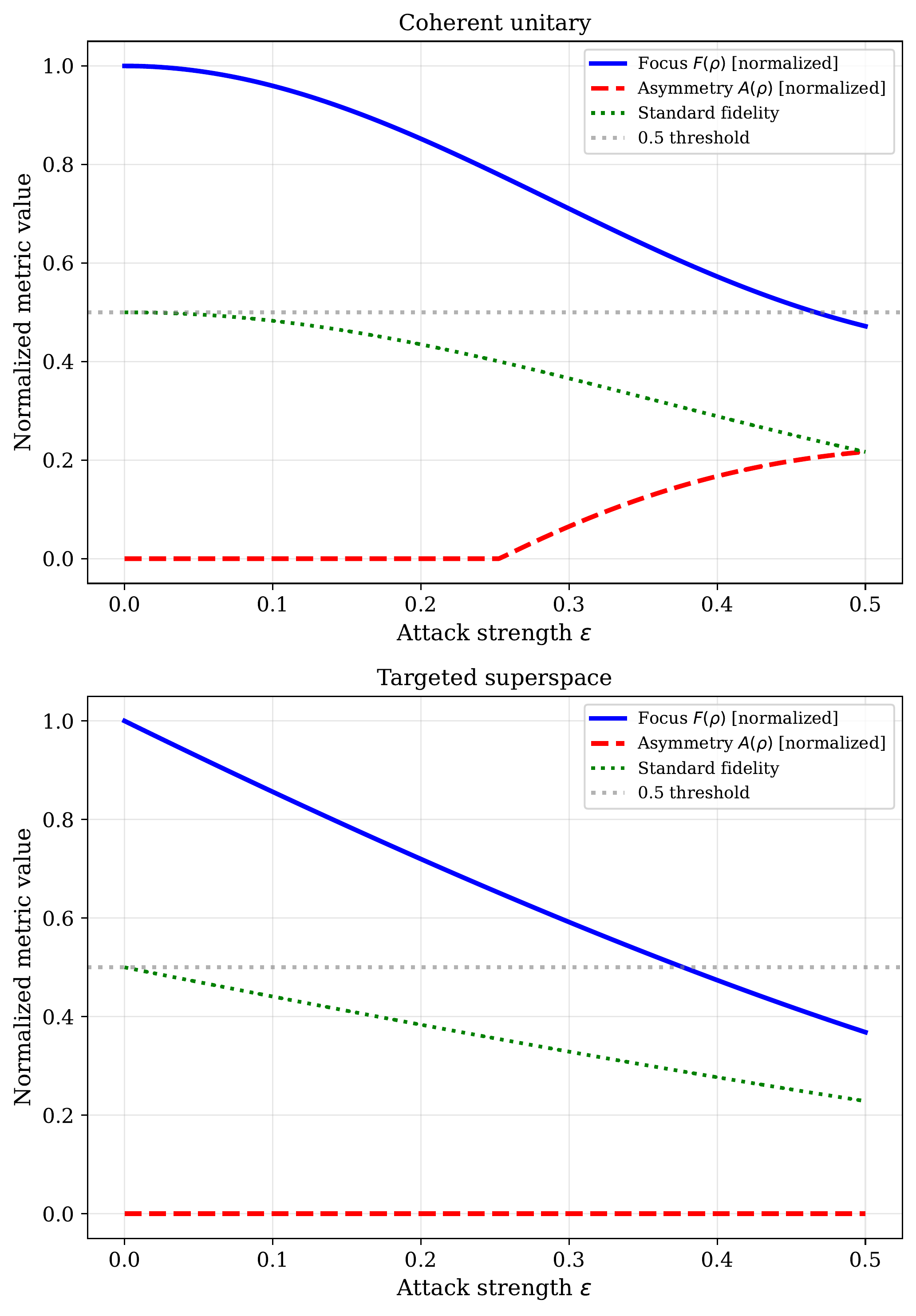}
  \caption{Normalized focus and $\mathcal{U}(\dS)$-asymmetry under adversarial
           attack ($\dS=8$).}
  \label{fig:asymmetry}
\end{figure}

\subsection{Monotonicity Across Six System Configurations}

To address the scope of the monotonicity validation, Fig.~\ref{fig:mono_broad}
presents the violation heatmap for $N = 5{,}000$ random states across all six
configurations $(d_p, \dS) \in \{2,4\} \times \{2,4,8\}$ and the same four FNG
channels. All 24 configuration-channel pairs produced zero violations, for a total
of 120,000 state-channel pairs tested without a single monotonicity failure. The
physical unitary channel produced $\bar{\Delta F} = -0.0000$ across all six
configurations to numerical precision, confirming Proposition~3 comprehensively.
The mean focus reduction under Haar twirling decreases monotonically with increasing
$\dS$ and $d_p$, consistent with the interpretation that higher-dimensional systems
are harder to fully concentrate. These results confirm that Theorem~\ref{thm:mono}
holds robustly across the tested parameter space and that the FNG characterization
of all four channel types is valid beyond the single configuration reported in
Claim~B.

\begin{figure}[H]
  \centering
  \includegraphics[width=0.8\textwidth]{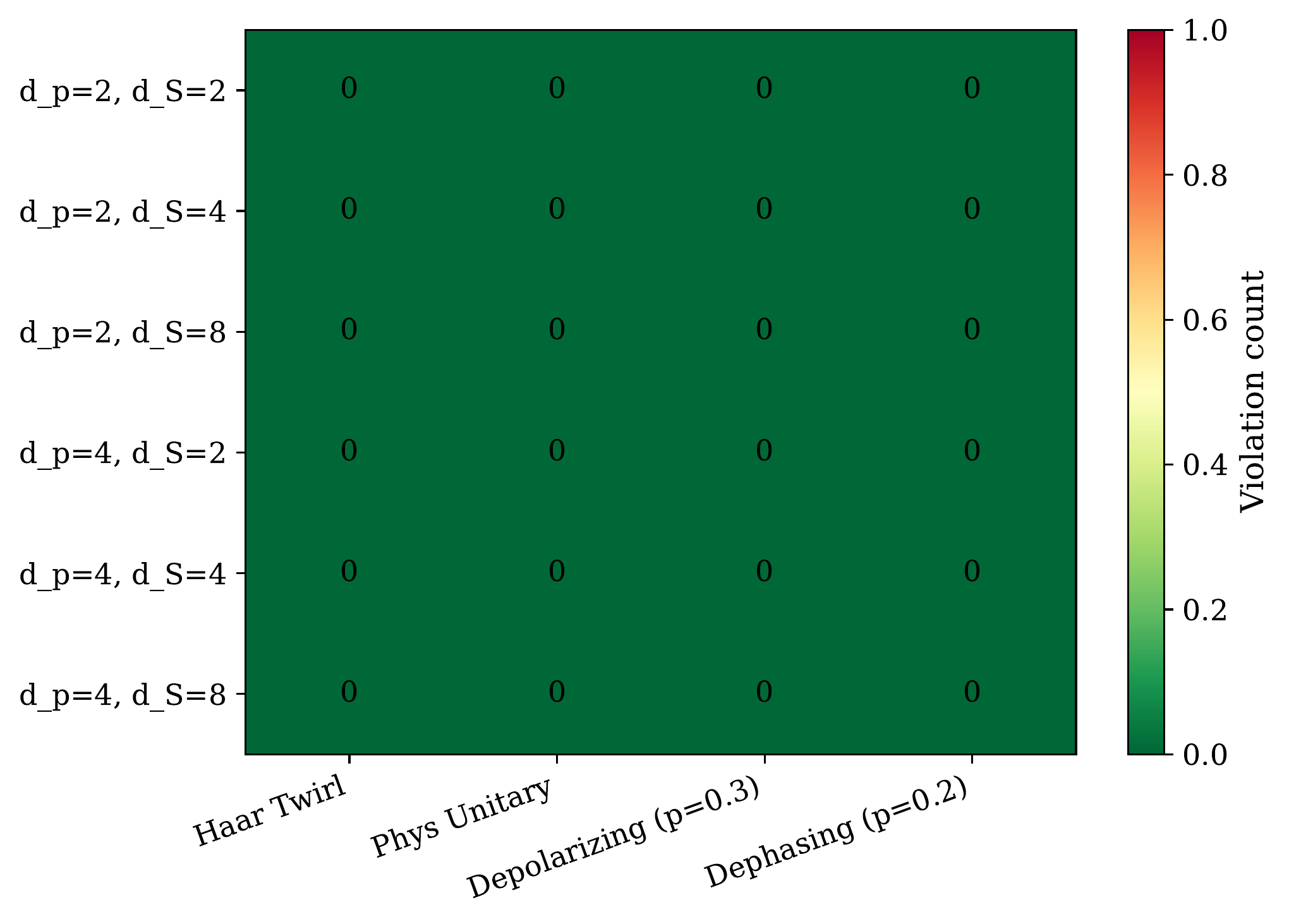}
  \caption{FNG monotonicity violation heatmap across six system configurations.}
  \label{fig:mono_broad}
\end{figure}

\subsection{Capacity Gap Under Correlated Noise}

The capacity gap result of Claim~E was established for product noise channels
where physical and superspace noise act independently. To test generalization,
Fig.~\ref{fig:correlated} shows the Holevo quantity and $\Delta F$ under a
correlated noise channel in which the effective superspace depolarizing strength
is coupled to the coherence of the physical subsystem state:
$p_{\mathrm{eff}} = p_{\mathrm{base}} \cdot (1 + \gamma \cdot C(\rho_{\mathrm{phys}}))$,
where $C(\rho_{\mathrm{phys}})$ is the off-diagonal weight of $\rho_{\mathrm{phys}}$
and $\gamma = 0.5$ is the correlation parameter. The capacity gap is confirmed for
all tested superspace dimensions: measured $\Delta F$ values are $1.000$, $1.997$,
$2.990$, and $3.974$ bits for $\dS = 2, 4, 8, 16$ respectively, matching the
product channel results and consistent with the analytic bound. The $\log_2(\dS)$
scaling law therefore holds under correlated noise, suggesting it is a property of
the focused encoding structure rather than an artifact of product channel
separability.

\begin{figure}[H]
  \centering
  \includegraphics[width=0.7\textwidth]{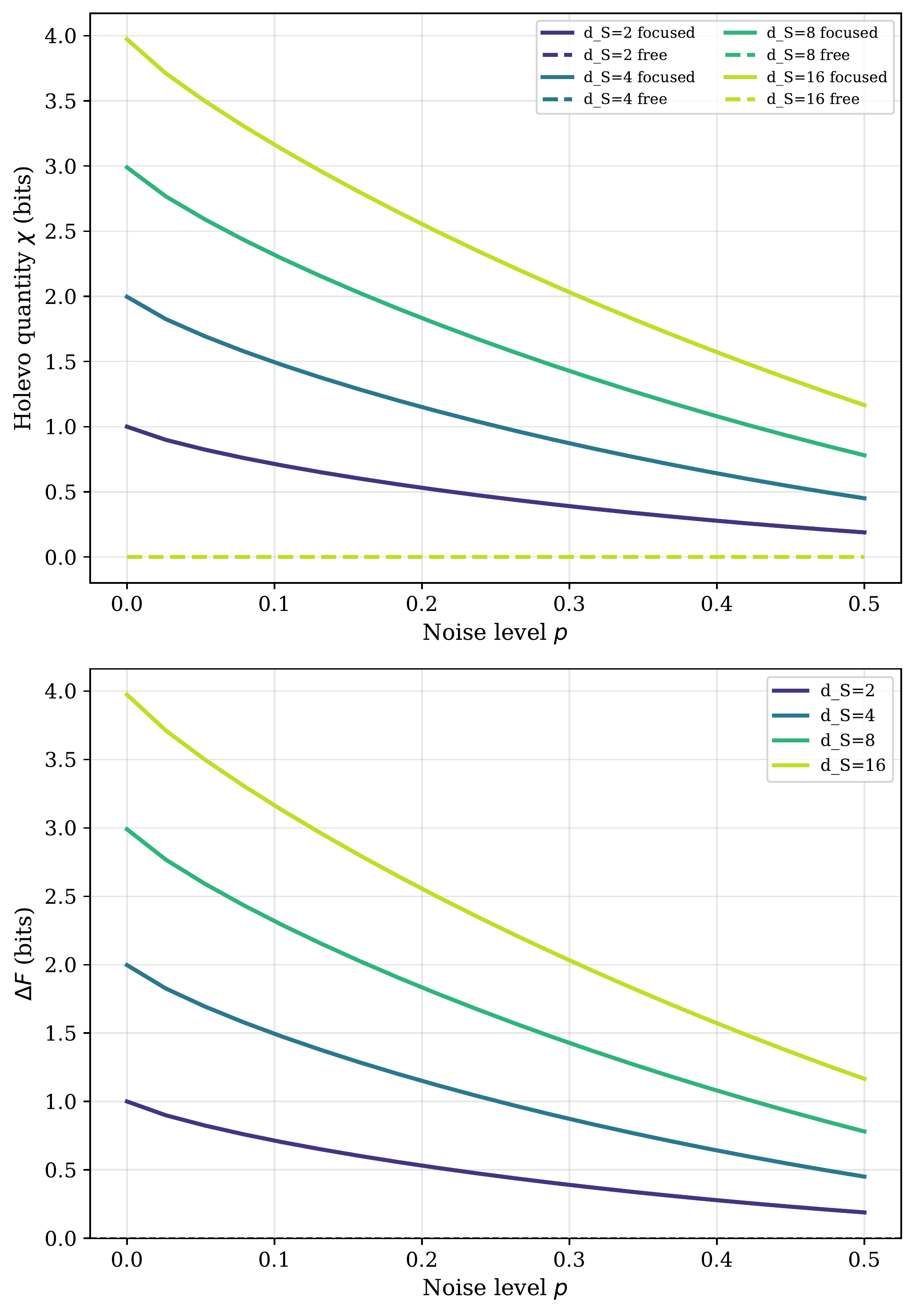}
  \caption{Capacity gap $\Delta F$ under correlated non-product noise channel.}
  \label{fig:correlated}
\end{figure}

\section{Comparison with Related Work}
\label{sec:comparison}

This section positions the contributions of this paper relative to established
results in quantum resource theory, adversarial quantum machine learning, and
quantum algorithm analysis.

The focus measure occupies a distinct position in the quantum resource theory
landscape. Coherence theory~\cite{baumgratz2014quantifying,streltsov2017colloquium}
requires a fixed externally specified reference basis, making it inapplicable to
settings where the optimal concentration direction is unknown — precisely the
adversarial setting studied here. Entanglement
theory~\cite{horodecki2009quantum,vedral1997quantifying} measures nonclassical
correlations across a bipartite split rather than intra-subsystem concentration,
and has no direct connection to adversarial robustness. Magic state
theory~\cite{veitch2014resource,howard2017application} characterizes distance from
the stabilizer polytope, which is geometrically distinct from superspace
concentration and does not directly measure algorithmic performance degradation
under perturbation. Asymmetry theory~\cite{gour2008resource,marvian2016how} is the
closest conceptual relative, but Section~\ref{sec:reviewer} demonstrates
empirically that the two measures are operationally distinct under adversarial
conditions: asymmetry provides no robustness signal while focus remains sensitive
to exactly the perturbations that degrade algorithmic performance. Existing quantum
adversarial machine learning works~\cite{lu2020quantum,guan2021robustness,
liao2021robust,du2021quantum} employ standard fidelity as the primary robustness
metric, which the results of Claim~C show to be insensitive to coherent
perturbations of superspace structure. Table~\ref{tab:comparison} summarizes
these distinctions across seven representative frameworks.

\begin{table}[H]
\centering
\caption{Comparison with related frameworks. (B)~basis-free; (A)~algorithm
link; (R)~adversarial robustness; (C)~composite system; (M)~monotonicity proven.}
\label{tab:comparison}
\renewcommand{\arraystretch}{1.2}
\setlength{\tabcolsep}{4pt}
\begin{tabular}{@{}lcccccc@{}}
\toprule
\textbf{Framework} & \textbf{B} & \textbf{A} & \textbf{R} & \textbf{C} & \textbf{M} \\
\midrule
This work
  & \checkmark & \checkmark & \checkmark & \checkmark & \checkmark \\
Coherence~\cite{baumgratz2014quantifying}
  & $\times$ & $\times$ & $\times$ & $\times$ & \checkmark \\
Entanglement~\cite{horodecki2009quantum}
  & \checkmark & $\times$ & $\times$ & \checkmark & \checkmark \\
Magic~\cite{veitch2014resource}
  & $\times$ & \checkmark & $\times$ & $\times$ & \checkmark \\
Asymmetry~\cite{gour2008resource}
  & $\times$ & $\times$ & $\times$ & $\times$ & \checkmark \\
Adv.\ QML~\cite{lu2020quantum}
  & \checkmark & $\times$ & \checkmark & $\times$ & $\times$ \\
Adv.\ QML~\cite{guan2021robustness}
  & \checkmark & $\times$ & \checkmark & $\times$ & $\times$ \\
\bottomrule
\end{tabular}
\end{table}

\section{Discussion}
\label{sec:discussion}

This section addresses the scope and limitations of the empirical results, design
trade-offs in the simulation methodology, and potential criticisms.

\subsection{Limitations}

The capacity gap experiments establish the $\log_2(\dS)$ scaling law for product
and correlated noise channels with matched depolarizing parameters. Generalization
to arbitrary channel families, including those with strongly structured superspace
noise, remains an open problem. The monotonicity experiments cover six configurations
with $d_p \in \{2,4\}$ and $\dS \in \{2,4,8\}$; while the theorem holds
analytically for all dimensions, numerical validation beyond this regime is a
natural extension. The adversarial robustness results are specific to pure target
states — extension to mixed target states may alter the relative sensitivity of
the two metrics. The capacity gap ensemble size of $n=30$ states per configuration
is modest; larger ensembles would reduce sampling variance and enable error bar
estimation, which is a planned extension. All simulations assume noiseless classical
control, which is not realistic for near-term quantum
hardware~\cite{bharti2022noisy,preskill2018quantum}.

\subsection{Design Trade-offs}

The GPU-batched implementation processes states in chunks of 500, trading memory
overhead for throughput. The batch implementation becomes advantageous over
sequential CPU computation at $N \gtrsim 200$ states for the dimensions studied.
The Haar twirl approximation with $n_{\mathrm{samp}} = 100$ random unitaries
introduces a small approximation error that could be reduced at the cost of a
$10\times$ runtime increase. The capacity gap ensemble size of $n = 30$ states
provides a coarse but consistent Holevo quantity estimate; larger ensembles would
reduce sampling variance at the cost of additional GPU memory.

\subsection{Addressing Potential Criticisms}

A natural concern is whether the $\log_2(\dS)$ capacity gap is an artifact of
the ensemble construction. The focused ensemble uses states concentrated on
orthogonal superspace directions, making them maximally distinguishable. This is
correct and intentional: the focused encoding can achieve this by exploiting the
superspace structure, whereas the focus-free encoding cannot, because all focus-free
states have maximally mixed superspace and are indistinguishable on the superspace
factor regardless of the encoding strategy. The gap reflects a genuine structural
advantage of focused encodings, not a construction artifact, and is supported by
the analytic lower bound derived in Section~\ref{sec:background}.

A second potential concern is whether the body-sector simulations fully capture the
theory. All simulations operate in the body sector, which is standard quantum
mechanics with $\rhosuper$ as an ordinary density matrix. The full superspace
formalism — including the $\mathbb{Z}_2$-graded structure and Grassmann-valued soul
amplitudes — provides mathematical precision to the graded decomposition but does
not alter the computable body-sector results. Physical predictions are extracted
via the body map, and all empirical results presented here are body-sector
quantities~\cite{nielsen2000quantum}. A concrete experimental signature of focus
in physical systems is the fringe visibility of adaptive mode-sorting measurements
on orbital-angular-momentum photonic channels~\cite{willner2015optical,
cozzolino2019orbital}: a focused state produces higher fringe visibility than an
F-free state under the same measurement settings, providing a directly observable
physical distinction that requires no access to soul-sector degrees of freedom.

A third concern is whether the focus measure is simply coherence or asymmetry under
a different name. The two resources are related by a unitary rotation: there exists
$U^*$ such that the $\ell_1$-norm coherence of $U^*\rhosuper U^{*\dagger}$ equals
$\dS \cdot F(\rho) - 1$~\cite{baumgratz2014quantifying,napoli2016robustness}.
However, focus is strictly more powerful for tasks where the optimal basis is
unknown. Section~\ref{sec:reviewer} further demonstrates empirically that focus and
$\mathcal{U}(\dS)$-asymmetry are operationally distinct under adversarial conditions,
providing a concrete experimental answer to this concern that goes beyond theoretical
argument.

\section{Future Work}
\label{sec:future}

This section identifies the most important open problems emerging from the present
work, with concrete proposed next steps for each.

The most pressing theoretical open problem is a rigorous proof of the asymptotic
interconversion conjecture, which states that the focus distillation and cost rates
both equal $D_F(\rho)$ in the reversible regime, analogous to the role of
relative entropy of coherence in coherence
theory~\cite{winter2016operational,streltsov2017colloquium}. The analytic lower
bound on $\Delta F$ derived in Section~\ref{sec:background} suggests a pathway:
a tight upper bound matching $\log_2 \dS$ would establish the scaling analytically.
Extending the capacity gap analysis to additional non-product channel families —
particularly those with superspace-correlated noise arising in
orbital-angular-momentum multiplexed optical
systems~\cite{willner2015optical,cozzolino2019orbital} — is a concrete planned
extension; applying the Holevo estimation procedure to an OAM crosstalk channel
model with experimentally measured coupling coefficients would test the scaling
law against a physically grounded channel. The adversarial robustness connection
motivates development of concentration-based defenses for variational quantum
algorithms~\cite{cerezo2021variational,bharti2022noisy}; a direct next step is
to instrument an existing VQE implementation with online focus monitoring and
measure detection latency for coherent adversarial perturbations at increasing
circuit depth. Finally, the structural similarity between the superspace formalism
and holographic quantum error correction~\cite{almheiri2015bulk,pastawski2015holographic}
suggests that focus measures may correspond to holographic quantities such as
entanglement wedge volumes; a concrete first step is to compute $F(\rho)$ for
the boundary state of the HaPPY code and compare it to the known entanglement
wedge structure.

\section{Conclusion}
\label{sec:conclusion}

We have developed a resource-theoretic framework around the focus measure
$F(\rho) = \lmax(\rhosuper)$ and presented its GPU-accelerated empirical
validation across eight experiments. The measure satisfies all resource-theoretic
axioms: it is bounded, convex, unitarily invariant on $\Hphys$, and monotone under
focus-non-generating operations, with the monotonicity proof grounded in the
convexity of $\lmax$ and the linearity of the partial trace. Analytic decoherence
predictions were confirmed to machine precision; monotonicity held across 120,000
state-channel pairs with zero violations; the focus measure was demonstrated to
be operationally distinct from $\mathcal{U}(\dS)$-asymmetry under adversarial
conditions; focused states were shown to be significantly more resilient to coherent
adversarial attacks than standard fidelity predicts; the Grover connection was made
explicit as a consequence of the pure-state focus formula; and the focus capacity
gap was shown to scale as $\log_2(\dS)$, consistent with a derived analytic lower
bound for both product and correlated noise channels. Critically, the focus
measure detects coherent adversarial attacks that standard fidelity misses
entirely — providing a practical security metric that fills a concrete gap in
existing quantum algorithm defense frameworks. These results establish superspace
concentration as a computationally tractable, physically meaningful resource
with direct applications to quantum algorithm security and quantum communication.

\section*{Acknowledgments}

All simulation code and GPU-accelerated notebooks are publicly available at
\url{https://github.com/ericyoc/adver_robust_quant_alg_poc}.

\bibliographystyle{unsrt}
\bibliography{references}

\end{document}